\newtheorem{theorem}{Theorem}
\newtheorem{lemma}[theorem]{Lemma}
\newtheorem{definition}[theorem]{Definition}
\newtheorem{proposition}[theorem]{Proposition}
\numberwithin{equation}{section}
\numberwithin{theorem}{section}
\theoremstyle{remark}
\newcommand{\diam}{\operatorname{diam}}
\newcommand{\tr}{{\operatorname{Tr}\,}}
\newcommand{\beq}[1]{\begin{equation} \label{#1}}
	\newcommand{\eeq}{\end{equation}}
\newcommand{\arcosh}{\operatorname{arcosh}}
\renewcommand{\epsilon}{\varepsilon}
\def\be{\begin{equation}}
	\def\ee{\end{equation}} 
\DeclareMathOperator{\dist}{dist}
\newcommand{\eps}{\epsilon}
\begin{document}
	\addtokomafont{author}{\centering}
	
\title{ \centering  The Quantum Random Energy Model \\  is the Limit of Quantum $ p $-Spin Glasses}

 	\author{  Anouar Kouraich, Chokri Manai,  Simone Warzel}
	\date{ \small December 19, 2025}
	
	\maketitle

	\minisec{Abstract}
	We consider the free energy of a class of spin glass models with $ p$-spin interactions in a transverse magnetic field. As $p \to \infty$, the infinite system-size free energy is proven to converge to that of the quantum random energy model. This is accomplished by combining existing analytical techniques addressing the non-commutative properties of such quantum glasses, with the description of the typical geometry of extreme negative deviations of the classical $ p $-spin glass.  We also review properties of the corresponding classical free energy and conjectures addressing $ 1/p $-corrections in the quantum case.\\
	
	\noindent
	AMS subject classification: 82D30, 82B44

	\bigskip	
	

\section{Introduction} 
Since the advent of classical, mean-field spin glass models, the interest in the influence of quantum effects on the physical properties of such systems has never ceased~\cite{FS86,Goldschmidt:1990aa,TD90,Cesare:1996aa,Obuchi:2007aa} (see also~\cite{CM22}).  
More recently, several rigorous results have been established, mainly concerning the simplest class of such quantum models namely the transversal field models. These results cover explicit formulas for the pressure (or equivalently: free energy) of generalized random energy models~\cite{MW20,MW21,MW22,Sh24}, the Hopfield model~\cite{Shcherbina:2020aa}, as well as a rather general quantum Parisi formula \cite{MW25,AdBr20}, and a proof of the persistence of replica symmetry breaking~\cite{Leschke:2021ab,Itoi:2023ac}.  

The main aim of this note is to anchor the quantum random energy model within the class of quantum $ p$-spin models. We prove the analogue of a classical result on the approximation of its pressure as the limit of $ p$-spin models, thus completing a task that was left open in~\cite{Manai:2020aa}.

\subsection{Classical case}
Classical Ising spin-glasses are  Gaussian processes on the configuration space $ \mathcal{Q}_N := \{ -1, 1\}^N $ with $ N $ the number of Ising spins. Any realization
$ U_p:   \mathcal{Q}_N  \to \mathbb{R} $ of such a  process represents a random energy landscape on $ \mathcal{Q}_N $. The process is uniquely characterized by its mean, which we will assume to be zero, $ \mathbb{E}[U_p(\pmb{\sigma}) ] = 0 $ for all $ \pmb{\sigma} = (\sigma_1 , \dots , \sigma_N) \in  \mathcal{Q}_N $, and its covariance,
\begin{equation}\label{eq:cov}
 \mathbb{E}[U_p(\pmb{\sigma}) U_p(\pmb{\tau}) ] = N  \  c_{p,N}\left( r_N(\pmb{\sigma}, \pmb{\tau})\right) ,
 \end{equation}
 which is a function $  c_{p,N}: [-1,1] \to  [-1,1] $ of the overlap 
$$
  r_N(\pmb{\sigma}, \pmb{\tau}) = \frac{1}{N} \sum_{j=1}^N \sigma_j \tau_j 
$$
 of two spin configurations. Throughout this note, we will assume that, for the fixed $ p \geq 2 $ under consideration: 
\begin{equation}\label{eq:ass}
\lim_{N\to \infty} \sup_{x\in [-1,1] } \left| c_{p,N}(x) - x^p \right| = 0 . 
\end{equation}
This covers any $ p $-spin interaction of the form
\begin{equation}\label{eq:pspin}
U_p(\pmb{\sigma}) =  \sqrt{\frac{p!}{N^{p-1}} } \sum_{ 1 \leq j_1< \dots <  j_p \leq N }^N g_{j_1,\dots,j_p}{\sigma}_{j_1} \cdots {\sigma}_{j_p}
\end{equation}
whose $ p $-body couplings are random arrays  $g_{j_1,\dots,j_p} $  of independent and identically distributed (i.i.d.), centered Gaussian random variables with variance one, as well as its simpler variant in which the summation in~\eqref{eq:pspin} extends to all indices $ j_1, \dots, j_p  \in \{ 1, \dots , N \}  $ and the normalization factor is changed accordingly to $ N^{\frac{1-p}{2}}$.  The latter case corresponds to $ c_{p,N}(x) = x^p $, independent of $ N $. 

The classical pressure
\begin{equation}\label{eq:clpressure}
 \Phi_{p,N}(\beta, 0) := \frac{1}{N} \ln \frac{1}{ 2^N}\,  \sum_{\pmb{\sigma}\in \mathcal{Q}_N} \exp\left( - \beta U_{p}(\pmb{\sigma}) \right)
\end{equation}
describes the thermodynamic properties of such glasses as a function of the inverse temperature $ \beta > 0 $. In the limit  $ N\to\infty  $ this function asymptotically agrees with its probabilistic (so-called: quenched) average, cf.~\eqref{eq:selfaveraging} below. 
Dazzling characteristics such as replica-symmetry breaking at low temperatures are captured by Parisi's infamous variational formula ~\cite{Mezard:1986aa,Tal11a,Tal11b,Pan13} for the quenched limit,  $\mathbb{E}\left[ \Phi_{p}(\beta, 0) \right] =\lim_{N\to \infty} \mathbb{E}\left[ \Phi_{p, N}(\beta, 0) \right]$. 
Despite being less complex than the original many-body problem, this variational formula is far from being simple, and fine properties of its optimizer, Parisi's replica-order parameter, are still under active investigation for general $ p$-spin glasses~\cite{Auffinger:2015aa,YZhou24}. It has been a comforting fact that the pressure has been explicitly computed in the large $ p $-limit. More precisely, in case $ p = \infty $, where the Gaussian process $ U_\infty(\pmb{\sigma} ) = \sqrt{N} g(\pmb{\sigma}) $ is composed of i.i.d. Gaussian random variables $  g(\pmb{\sigma})  $ with variance one, the (quenched) pressure is asymptotically given by~\cite{Derrida:1981aa,Bov06}
\begin{equation}
\lim_{N\to \infty}\mathbb{E}\left[ \Phi_{\infty,N}(\beta,0) \right] = \Phi_{\infty}(\beta,0)  = \left\{ \begin{array}{ll} \frac{1}{2}\beta^2   , & \beta \leq \beta_c,  \\
 									 \beta \beta_c - \frac{1}{2} \beta_c^2   , & \beta > \beta_c . \end{array} \right.
\end{equation}
In this case, a (second-order) phase transition at $ \beta_c := \sqrt{2\ln 2}  $ towards a glass phase is evident.  
One motivation for the study of this limiting case, known as the random energy model (REM), was the fact that the infinite system-size pressure is continuous as $ p \to \infty $. 
\begin{proposition}[cf.~\cite{Talagrand:2000aa,Bovier:2002aa,Panchenko:2014aa}]\label{prop:Clpinfty}
For any $ \beta > 0 $:
\begin{equation}\label{eq:convclassical}
 \lim_{p\to\infty} \lim_{N\to \infty} \mathbb{E}\left[ \Phi_{p,N}(\beta, 0) \right] =   \Phi_{\infty}(\beta,0)  .
 \end{equation} 
 \end{proposition}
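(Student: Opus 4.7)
The plan is to establish matching upper and lower bounds on the limiting quenched pressure $F_p(\beta) := \lim_{N\to\infty}\mathbb{E}[\Phi_{p,N}(\beta,0)]$ (whose existence for each fixed $p$ is standard via Gaussian concentration plus Guerra--Toninelli interpolation under~\eqref{eq:ass}) and then to pass to $p\to\infty$.

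For the upper bound, Jensen's inequality yields the annealed estimate
\begin{equation*}
\mathbb{E}[\Phi_{p,N}(\beta,0)] \,\leq\, \frac{1}{N}\ln\mathbb{E}\Bigl[\frac{1}{2^N}\sum_{\pmb{\sigma}} e^{-\beta U_p(\pmb{\sigma})}\Bigr] \,=\, \tfrac{\beta^2}{2}\,c_{p,N}(1) \,\xrightarrow[N\to\infty]{}\, \tfrac{\beta^2}{2},
\end{equation*}
which already coincides with $\Phi_\infty(\beta,0)$ throughout the paramagnetic regime $\beta \leq \beta_c$. For $\beta > \beta_c$ I would invoke Guerra's one-step replica-symmetry-breaking bound $\mathbb{E}[\Phi_{p,N}(\beta,0)] \leq \mathcal{P}^{1\mathrm{RSB}}_p(\beta, m, q)$, valid for every $m \in [0,1]$ and $q \in [0,1)$. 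Choosing the Parisi breakpoint $m_\ast := \beta_c/\beta \in (0,1)$ and a plateau $q = q_p \to 1$ tuned to the covariance $\xi_p(q) = q^p$, the $p\to\infty$ limit of the functional degenerates to Derrida's REM Parisi formula, whose value at $m_\ast$ equals $\beta\beta_c - \tfrac{1}{2}\beta_c^2 = \Phi_\infty(\beta,0)$ (using $\ln 2 = \beta_c^2/2$).

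For the lower bound at $\beta > \beta_c$, the trivial estimate $Z_{p,N}(\beta) \geq 2^{-N} e^{-\beta\min_{\pmb{\sigma}} U_p(\pmb{\sigma})}$ gives $\mathbb{E}[\Phi_{p,N}(\beta,0)] \geq -\ln 2 - \tfrac{\beta}{N}\mathbb{E}[\min_{\pmb{\sigma}} U_p(\pmb{\sigma})]$. The ground-state asymptotic $-\mathbb{E}[\min_{\pmb{\sigma}} U_p(\pmb{\sigma})]/N \to \beta_c$ in the iterated limit $N\to\infty$ then $p\to\infty$ follows from a first-moment estimate on the number of configurations with energy below $-N\beta_c$ together with Borell concentration, exploiting the decorrelation $c_{p,N}(r) \to 0$ for $|r|<1$ implied by~\eqref{eq:ass}; this yields $\liminf_{p\to\infty} F_p(\beta) \geq \beta\beta_c - \ln 2 = \Phi_\infty(\beta,0)$. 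In the complementary regime $\beta \leq \beta_c$, the matching lower bound is furnished by a standard second-moment computation on $Z_{p,N}$ combined with Paley--Zygmund and Gaussian concentration, valid below the replica-symmetric critical temperature $\beta_c^{(p)}$ of the $p$-spin model, which itself converges to $\beta_c$ as $p\to\infty$.

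The most delicate step is the $p\to\infty$ limit of Guerra's 1-RSB functional for $\beta > \beta_c$: one must tune $q_p \to 1$ so that neither the random-field variance governed by $\xi'_p(q_p) = p q_p^{p-1}$ nor the deterministic contribution from $\xi_p(1) - m q_p^p$ blows up, while simultaneously ensuring that the entropic term $m^{-1}\ln 2$ balances the energetic $\tfrac{1}{2}\beta^2 m$ in the limit. This balance is precisely what selects $m_\ast = \beta_c/\beta$ and reproduces the REM free energy, and it is the technical heart of the arguments in the cited works of Talagrand, Bovier-Kurkova, and Panchenko.
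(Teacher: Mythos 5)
Your argument is correct in outline but takes a genuinely different route from the one sketched in the paper. The paper's proof is much leaner: it uses (a) the annealed identity $\mathbb{E}[\Phi_{p,N}(\beta,0)] = \tfrac{\beta^2}{2}c_{p,N}(1)$ holding up to $\beta_c(1-c_p)$ with $c_p\to 0$ (truncated second moment, cf.\ Bovier), (b) a uniform-in-$\beta$ derivative bound $\partial_\beta \Phi_{p,N}(\beta,0) \leq -\tfrac1N\mathbb{E}[\min_{\bsig}U_p(\bsig)] \leq \beta_c$ from the Gaussian maximal inequality, and (c) convexity of the pressure in $\beta$ to glue these two facts together and squeeze $\Phi_p(\beta,0)$ between two sequences converging to $\Phi_\infty(\beta,0)$ on the whole half-line. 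This sidesteps Guerra's 1-RSB upper bound and the delicate tuning of the Parisi parameters $(m,q_p)$ as $p\to\infty$ that you correctly flag as the technical heart of your version. Your route, while heavier, is also closer to the explicit form of the REM Parisi formula and makes the low-temperature structure more visible; the paper's route buys economy.

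One imprecision worth flagging in your sketch: for the low-temperature lower bound you write that $-\mathbb{E}[\min_{\bsig}U_p(\bsig)]/N \to \beta_c$ follows from a \emph{first}-moment estimate on the number of configurations below $-N\beta_c$ together with Borell concentration. A first-moment bound controls only the direction $-\mathbb{E}[\min]/N \lesssim \beta_c$ (as in the paper, via $\mathbb{E}[\max_{\bsig}(-U_p)] \leq \sqrt{2N\ln 2 \cdot N c_{p,N}(1)}$). To get the matching lower bound on $-\min$ — i.e.\ that there really \emph{are} configurations near depth $-N\beta_c$ for large $p$ — you need the second-moment (Paley--Zygmund) argument that exploits the decorrelation $c_{p,N}(r)\to 0$ for $|r|<1$. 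You do invoke that decorrelation in passing, but it should be tied to a second-moment rather than a first-moment count. With that correction the outline is sound.
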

 This result dates back to Derrida~\cite{Derrida:1980aa}. Its rigorous proof in the present setting is contained in the following works. 
The existence of the  limit $ N \to \infty $ is implied by Panchenko's proof of the Parisi formula for $ p $-spin glasses~\cite{Panchenko:2014aa}. 
The convergence of the subsequent limit $ p \to \infty $ is implicitly contained in \cite{Talagrand:2000aa,Bovier:2002aa}. It follows straightforwardly from bounding the derivative   $  \frac{\partial}{\partial \beta} \Phi_{p,N}(\beta,0) $  from above by $ \beta_c $ using the maximal inequality \cite[Thm. 3.5]{BLM13},  
$$\mathbb{E}[\min_{ \pmb{\sigma} } U_p( \pmb{\sigma}) ]   \geq - \beta_c N .
$$
Moreover, by the (truncated) second moment method the annealed and quenched pressure agree in an interval enlarging with $ p $, i.e. $  \Phi_{p}(\beta,0)  = \beta^2/2 $ for $ \beta \leq \beta_c ( 1-c_p ) $ with $ \lim_{p\to \infty} c_p = 0 $, cf.~\cite{Bov06}.  Convexity of the pressure $   \Phi_{p}(\beta,0)$ as a function of $ \beta $, then yields~\eqref{eq:convclassical}.  \\

\subsection{Quantum case} 
The main purpose of the present note is to generalize Proposition~\ref{prop:Clpinfty}  to the situation in which an additional transversal constant magnetic field of strength $ \Gamma > 0 $ is present. Instead of Ising spins, the basic entities are then $ N $ quantum spin-$1/2 $, whose Hilbert space can be represented as $ \ell^2(\mathcal{Q}_N) $, i.e., square-summable sequences indexed by Ising configurations $ \pmb{\sigma} $. The latter define an orthonormal basis $ | \pmb{\sigma}\rangle $, $\pmb{\sigma} \in \mathcal{Q}_N $, in this Hilbert space. Here and in the following, we employ the bra-ket notation, in which $ \langle \pmb{\sigma} | \psi \rangle $ represents the scalar product of such basis vectors with an arbitrary vector $ \psi $ on the Hilbert space.  In the distinguished orthonormal basis, the random, self-adjoint Hamiltonian
$$
 H_{p,N} = U_p - \Gamma T 
$$
is composed of a self-adjoint multiplication operator, which is diagonal,  $ U_p | \pmb{\sigma} \rangle = U_p(\pmb{\sigma})| \pmb{\sigma} \rangle $, and the operator involving the transversal field of strength $ \Gamma > 0 $, which acts as 
\begin{equation}
\langle \pmb{\sigma} |  T \psi\rangle= \sum_{j=1}^N  \langle F_j \pmb{\sigma} | \psi\rangle \qquad \mbox{on}\quad \psi \in  \ell^2(\mathcal{Q}_N)  ,
\end{equation}
with $ F_j \pmb{\sigma} := (\sigma_1, \dots , - \sigma_j , \dots , \sigma_N) $ the spin-flip operator on the $j$th component. 

%

Quantum mechanically, to obtain the pressure, one substitutes in~\eqref{eq:clpressure} the sum on configurations by a trace,
\begin{equation}
 \Phi_{p,N}(\beta, \Gamma) = \frac{1}{N} \ln \frac{1}{ 2^N}\,  \tr \exp\left( - \beta H_{p,N}\right)  . 
\end{equation}
If $ \Gamma = 0 $ this reduces to~\eqref{eq:clpressure}. The description of the limit $ N \to \infty$ through a quantum Parisi formula is available~\cite{MW25} for mixed spin-glasses -- at least for even $ p$. 
However, the variational principle is even more cumbersome than in the classical case. Remarkably, for the case of the quantum REM ($ p= \infty $), the limiting quenched pressure can still be computed~\cite{Goldschmidt:1990aa}
\begin{equation}\label{eq:PressureQREM}
\Phi_\infty(\beta,\Gamma) =  \max  \left\{ \Phi_\infty(\beta,0), \ln \cosh (\beta \Gamma )\right\} .
\end{equation}
A proof of this formula was accomplished in \cite{MW20} and the method was further used to characterize the large deviations of trajectories in a REM potential \cite{GMW23,MW25b}.
The quantum feature is a (first-order) phase transition at the critical field strength
$$
 \Gamma_c(\beta) := \beta^{-1} \arcosh\left( \exp\left(\Phi_\infty(\beta,0)\right)  \right) 
$$
into a quantum paramagnetic phase. In contrast to the case of a longitudinal field, whose critical field strength is known as the Almeida-Thouless line, the quantum paramagnetic phase extends even to zero temperature ($\beta = \infty$). \\

Our main result is the continuity of the quenched pressure in the limit $ p \to \infty $. This completes a partial result in~\cite{Manai:2020aa}, in which a coupled limit $ p(N) \to \infty $ was considered. 
\begin{theorem}\label{thm:main}
For any $\beta , \Gamma \geq 0 $: 
\begin{equation}\label{eq:main}
\lim_{p\to \infty}  \liminf_{N\to \infty} \mathbb{E}\left[  \Phi_{p,N}(\beta, \Gamma) \right]  =  \lim_{p\to \infty}  \limsup_{N\to \infty} \mathbb{E}\left[  \Phi_{p,N}(\beta, \Gamma) \right]  =\Phi_\infty(\beta, \Gamma) 
\end{equation}
\end{theorem}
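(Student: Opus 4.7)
I treat the two inequalities $\liminf\geq\Phi_\infty(\beta,\Gamma)$ and $\limsup\leq\Phi_\infty(\beta,\Gamma)$ separately, by complementary techniques.

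For the \emph{lower bound} I would apply the Peierls--Bogoliubov inequality $\ln\tr e^{-\beta(A+B)}\geq\ln\tr e^{-\beta A}-\beta\tr(\rho_A B)$ with $\rho_A=e^{-\beta A}/\tr e^{-\beta A}$ twice. Taking $A=U_p$, $B=-\Gamma T$: the classical Gibbs state $\rho_A$ is diagonal in the $|\pmb{\sigma}\rangle$-basis and $T$ has vanishing diagonal, so $\tr(\rho_A T)=0$ and $\Phi_{p,N}(\beta,\Gamma)\geq\Phi_{p,N}(\beta,0)$. Taking $A=-\Gamma T$, $B=U_p$: the state $\rho_A=e^{\beta\Gamma T}/\tr e^{\beta\Gamma T}$ is disorder-independent and $\Ee U_p=0$, so $\Ee\,\Phi_{p,N}(\beta,\Gamma)\geq N^{-1}\ln[2^{-N}\tr e^{\beta\Gamma T}]=\ln\cosh\beta\Gamma$. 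Taking $\liminf_N$ and then $\lim_{p\to\infty}$, invoking Proposition~\ref{prop:Clpinfty} on the first bound, yields $\lim_p\liminf_N\Ee\,\Phi_{p,N}(\beta,\Gamma)\geq\max\{\Phi_\infty(\beta,0),\ln\cosh\beta\Gamma\}=\Phi_\infty(\beta,\Gamma)$.

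For the \emph{upper bound} the starting point is the Lie--Trotter/Feynman--Kac identity
\[
\frac{1}{2^N}\tr e^{-\beta H_{p,N}}=e^{N\Phi_{p,N}(\beta,0)}+\cosh(\beta\Gamma)^N\,\mathcal{E}^{\mathrm{nc}}\!\!\left[e^{-\int_0^\beta U_p(\omega(s))\,ds}\right],
\]
where $\omega$ is the continuous-time rate-$\Gamma$ independent spin-flip chain on $\mathcal{Q}_N$, the bridge law is started uniformly and conditioned on $\omega(\beta)=\omega(0)$, the first summand collects the constant bridges $\omega\equiv\pmb{\sigma}$ (which exactly reproduce the classical partition function), and $\mathcal{E}^{\mathrm{nc}}$ denotes the expectation restricted to non-constant bridges. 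The elementary bound $\frac{1}{N}\ln(a+b)\leq\frac{\ln 2}{N}+\max\{\frac{1}{N}\ln a,\frac{1}{N}\ln b\}$ converts the sum into the required maximum, reducing the task to showing $\frac{1}{N}\ln\mathcal{E}^{\mathrm{nc}}\to 0$ in expectation, as $N\to\infty$ then $p\to\infty$. Gaussian integration over the disorder gives
\[
\Ee\,\mathcal{E}^{\mathrm{nc}}=\mathcal{E}^{\mathrm{nc}}\!\!\left[e^{\frac{N}{2}\int_0^\beta\int_0^\beta r(\omega(s),\omega(t))^p\,ds\,dt}\right],
\]
and on non-constant bridges overlaps decorrelate exponentially ($\Ee\,r(\omega(s),\omega(t))\sim e^{-2\Gamma|t-s|}$), so the double integral is $O(1/p)$ on typical paths, yielding $\frac{1}{N}\ln\Ee\mathcal{E}^{\mathrm{nc}}\leq C/p\to 0$. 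Borell's Gaussian concentration of the pressure upgrades this annealed control to the required quenched bound.

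\emph{Main obstacle.} The atypical bridge paths that retain macroscopic overlap with their starting point for a non-negligible portion of $[0,\beta]$ inflate $\int\!\int r^p$ to order unity and must be excluded separately. Here the description of the typical geometry of extreme negative deviations enters decisively: truncating to the deep-well set $D_L:=\{\pmb{\sigma}:U_p(\pmb{\sigma})\leq -LN\}$ for $L$ slightly below $\beta_c$, the sparsity of $D_L$ for large $p$---pairs in $D_L$ are at macroscopic Hamming distance with high probability, reflecting the decay $r^p\to 0$ for $|r|<1$---prohibits any non-constant bridge from traversing two distinct wells. Consequently at most one deep well is visited through non-constant paths, and its contribution is already absorbed in the constant-path term up to negligible errors. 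Making this geometric decoupling quantitative, while handling the non-commutativity of $U_p$ and $T$ via the operator-theoretic bounds inherited from \cite{MW20,Manai:2020aa}, is the principal technical step.
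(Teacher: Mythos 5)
Your lower bound is essentially identical to the paper's: Peierls--Bogoliubov with the two trial states $\varrho\propto e^{-\beta U_p}$ and $\varrho\propto e^{\beta\Gamma T}$ is exactly the Gibbs variational argument of the paper's Proposition 2.2, and invoking Proposition~\ref{prop:Clpinfty} at the end is the same step.

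Your upper bound is a genuinely different route (Feynman--Kac/annealed second moment vs.\ the paper's operator-norm decomposition of the Hamiltonian), but as it stands it has a real gap, and you have correctly identified where: the atypical bridges that remain trapped near a deep well. The issue is not merely that the argument is unfinished --- the proposed resolution via the deep-well set $D_L$ rests on a geometric claim that is wrong for finite $p$. You write that ``pairs in $D_L$ are at macroscopic Hamming distance with high probability''; that is a feature of the REM ($p=\infty$), not of the $p$-spin model. For finite $p$ the extreme sites cluster into lumps of diameter of order $N r_p L_p$ (macroscopic, though with $r_p L_p\to 0$ as $p\to\infty$), so many pairs of deep sites sit at tiny distance. ``Sparsity'' therefore only rules out bridges hopping between \emph{distinct} lumps; it does nothing for bridges that stay inside one lump, and those are exactly the paths that make $\int\!\!\int r(\omega(s),\omega(t))^p\,ds\,dt$ order one. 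The paper handles this by a quantitative statement about the cost of confinement: the operator norm of $T$ restricted to a ball of radius $N\rho$ is at most $2N\sqrt{\rho}$ (Lemma~\ref{lem:opnormT}, from Friedman--Tillich / Manai--Warzel). In Feynman--Kac language this is precisely the large-deviation penalty a bridge pays for remaining inside a small ball for time $\beta$, and without some version of it your truncation argument cannot close. The paper's other key input --- the probabilistic control of the lump diameters in Lemma~\ref{lem:prob} via a last-exit path construction and a Gaussian union bound --- would also be needed to justify your ``at most one deep well is visited.'' So while the path-integral reformulation is an attractive and plausibly equivalent framework, the two ingredients that actually make the proof work (the restricted-$T$ norm bound and the lump-diameter estimate) are the part your sketch leaves open, and they are not cosmetic.

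Two smaller points: your identity $\mathbb{E}\,r(\omega(s),\omega(t))\sim e^{-2\Gamma|t-s|}$ controls the mean overlap, but the annealed quantity $\mathbb{E}[\mathcal{E}^{\mathrm{nc}}]=\mathcal{E}^{\mathrm{nc}}[e^{\frac{N}{2}\int\!\int r^p}]$ is dominated by the tail of the overlap distribution, not its mean, so ``$O(1/p)$ on typical paths'' does not by itself bound the expectation. And the normalization of the bridge term ($e^{N\beta\Gamma}$ vs.\ $\cosh(\beta\Gamma)^N$) depends on whether $\mathcal{E}^{\mathrm{nc}}$ is the conditioned bridge law or the unnormalized return expectation; this should be pinned down, though it does not affect the exponential scale.
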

The proof, which combines functional analytic techniques from \cite{MW20,Manai:2023ys} with the probabilistic control of the size of clusters of extreme negative energies of $ U_p $,  will be spelled out in Section~\ref{sec:proof}.

Several remarks are in order:
\begin{enumerate}
\item 
In case of a pure $ p $-spin glass with $ p $ even, the existence of the limit $   \Phi_{p}(\beta, \Gamma) =  \lim_{N\to \infty} \mathbb{E}\left[  \Phi_{p,N}(\beta, \Gamma) \right]  $ is guaranteed as a by-product of the proof of the quantum Parisi formula in~\cite{MW25}. In this case, one may reformulate the result as $   \lim_{p\to \infty}  \lim_{N\to \infty} \mathbb{E}\left[  \Phi_{2p,N}(\beta, \Gamma) \right]  = \Phi_\infty(\beta, \Gamma) $. 
Since in the general case covered here by assumption~\eqref{eq:ass}, the existence of the limit $ N \to \infty $ is not warranted, we use the upper and lower limits in~\eqref{eq:main}. 

\item 
Based on non-rigorous calculations using the replica trick and a $ 1/p $ expansion \cite{Goldschmidt:1990aa,TD90,Cesare:1996aa,Obuchi:2007aa}, physicists long predicted the phase diagram of the quantum $ p $-spin glass	to converge to that of the quantum REM. 
It is an interesting question to justify these $ 1/p $ corrections. On the level of the quenched pressure, these predictions~\cite{TD90,Cesare:1996aa} agree with a (non-rigorous) calculation up to second-order perturbation theory:
\begin{equation}
 \Phi_p(\beta,\Gamma) \approx \Phi_\infty(\beta,\Gamma) +\frac{1}{p} \left\{ \begin{array}{ll} \frac{\beta}{2\Gamma \tanh(\beta \Gamma) } , &\Gamma >  \Gamma_c(\beta) , \\[1ex]
 \frac{\Gamma^2}{2}  , & \Gamma <  \Gamma_c(\beta) , \; \beta < \beta_c , \\[1ex]
 \frac{\Gamma^2 \beta  }{2\beta_c} , & \Gamma <  \Gamma_c(\beta) , \;  \beta > \beta_c . 
 \end{array} 
  \right.
\end{equation}
In the quantum paramagnetic phase, characterized by $ \Gamma >  \Gamma_c(\beta) $,  the unperturbed system is thereby taken to be the quantum paramagnet $ - \Gamma T $ and the perturbation is the REM. In the unfrozen and frozen REM, characterized by $ \Gamma <  \Gamma_c(\beta) $ and  $ \beta < \beta_c $ and $ \beta > \beta_c $,  the roles are exchanged in this second-order calculation. 
Establishing at least part of this rigorously would be a major step and require to extend the analysis in~\cite{Manai:2023ys} from $ p = \infty $ to finite, but large $ p $.
\item

In the proof of Theorem~\ref{thm:main}, use will be made of the self-averaging property of the pressure, which is known to extend straightforwardly from the classical to the quantum case, i.e. for all $ t > 0 $:
\begin{equation}\label{eq:selfaveraging}
  \mathbb{P}\left(\left| \Phi_N^{p}(\beta,\Gamma)  - \mathbb{E}\left[\Phi_N^{p}(\beta,\Gamma) \right]  \right|  > \frac{t \, \beta }{\sqrt{N} } \right) \leq  2 \,  \exp\left(- \frac{t^2}{4}\right) , 
\end{equation}
cf.~\cite{Crawford:2007aa} and \cite[Prop. 2.1]{Manai:2020aa}. 
\end{enumerate}

\section{Proof of the main result}\label{sec:proof}

Theorem~\ref{thm:main} follows by establishing asymptotically coinciding upper and lower bounds. 
\subsection{Lower bound}
A lower bound has essentially been established in~\cite{MW20,Manai:2020aa}. In order to keep this note self-contained, we will spell out a sketch of the proof.%
\begin{proposition}[cf.\ Lemma 2.1 in~\cite{MW20}]\label{prop:lower}  For any $\beta , \Gamma \geq 0 $: 
\begin{equation}
	\lim_{p\to \infty} \liminf_{N\to \infty} \mathbb{E}\left[ \Phi_{p,N}(\beta, \Gamma) \right]\geq  \max\left\{ \Phi_{\infty}(\beta,0)  , \ln \cosh (\beta \Gamma ) \right\} .
	\end{equation}
\end{proposition}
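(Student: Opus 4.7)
The plan is to split $\max\{\Phi_\infty(\beta,0),\ln\cosh(\beta\Gamma)\}$ into its two parts and prove each lower bound separately, each via a different trial state that realizes one of the two competing phases.

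For the first bound $\Phi_\infty(\beta,0)$, I would apply the Peierls--Bogoliubov inequality $\langle\psi|e^{A}\psi\rangle\geq e^{\langle\psi|A\psi\rangle}$ with $A=-\beta H_{p,N}$ and $\psi=|\pmb{\sigma}\rangle$. Since the spin-flip operator sends $|\pmb{\sigma}\rangle$ to orthogonal basis vectors, $\langle\pmb{\sigma}|T|\pmb{\sigma}\rangle=0$, so
\[
  \langle\pmb{\sigma}|e^{-\beta H_{p,N}}|\pmb{\sigma}\rangle\;\geq\; e^{-\beta U_p(\pmb{\sigma})}.
\]
Summing over $\pmb{\sigma}\in\mathcal{Q}_N$ yields the pathwise inequality $\Phi_{p,N}(\beta,\Gamma)\geq \Phi_{p,N}(\beta,0)$. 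Taking expectation, then $N\to\infty$, then $p\to\infty$, and invoking Proposition~\ref{prop:Clpinfty} delivers the first required bound.

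For the second bound $\ln\cosh(\beta\Gamma)$, I would use the Gibbs variational principle
\[
  \ln\tr e^{-\beta H_{p,N}}\;=\;\sup_\rho\bigl[-\beta\tr(\rho H_{p,N})-\tr(\rho\ln\rho)\bigr]
\]
with the trial density matrix $\rho_\Gamma:=e^{\beta\Gamma T}/\tr e^{\beta\Gamma T}$. Since $\tr e^{\beta\Gamma T}=(2\cosh(\beta\Gamma))^N$, this trial state contributes $N\ln2+N\ln\cosh(\beta\Gamma)$ from the kinetic and entropy terms, so that
\[
  \Phi_{p,N}(\beta,\Gamma)\;\geq\;\ln\cosh(\beta\Gamma)\;-\;\frac{\beta}{N}\,\tr(\rho_\Gamma U_p).
\]
The crucial observation is that $\rho_\Gamma$ factorizes over spins, with each single-site reduced state having diagonal entries $1/2$ in the $\sigma^z$-eigenbasis; hence $\langle\pmb{\sigma}|\rho_\Gamma|\pmb{\sigma}\rangle=2^{-N}$ for all $\pmb{\sigma}$, and since $U_p$ is diagonal,
\[
  \tr(\rho_\Gamma U_p)\;=\;2^{-N}\sum_{\pmb{\sigma}\in\mathcal{Q}_N} U_p(\pmb{\sigma}).
\]
This is a centered Gaussian of variance $2^{-2N}\sum_{\pmb{\sigma},\pmb{\tau}}N\,c_{p,N}(r(\pmb{\sigma},\pmb{\tau}))$, which by~\eqref{eq:ass} is bounded by $N\,\epsilon_N+N\,\mathbb{E}\bigl[|r|^p\bigr]$ where $r$ is the overlap of two independent uniform spin configurations. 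Khintchine's inequality gives $\mathbb{E}[|r|^p]=O(N^{-p/2})$, so the variance is $o(N^2)$, and hence $\mathbb{E}|\tr(\rho_\Gamma U_p)|/N=o_N(1)$. Taking expectation and $N\to\infty$ yields the second bound, and combining with the first completes the proof after $p\to\infty$.

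The main obstacle is more conceptual than technical: one has to identify the two natural trial states that individually saturate the two phases of the quantum REM, namely the classical Gibbs measure (entering via diagonal Jensen / Peierls--Bogoliubov) and the fully-mixed paramagnetic state (entering via the variational principle). Once these are in hand, control of $\tr(\rho_\Gamma U_p)$ reduces to an elementary Gaussian second-moment estimate that uses assumption~\eqref{eq:ass} only mildly.
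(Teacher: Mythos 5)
Your proof is correct and takes essentially the same approach as the paper: the paper also applies the Gibbs variational principle~\eqref{eq:Gibbs} with the two trial states $\varrho\propto e^{-\beta U_p}$ and $\varrho\propto e^{\beta\Gamma T}$, and your Peierls--Bogoliubov step on $|\pmb{\sigma}\rangle$ is just the diagonal form of the first choice. One small simplification: the Khintchine/second-moment estimate in your second bound is unnecessary, since $\rho_\Gamma$ is deterministic and $U_p$ is centered, so $\mathbb{E}\left[\tr(\rho_\Gamma U_p)\right]=0$ exactly and the error term drops out upon taking the quenched average.
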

\begin{proof} 
The proof is based on the Gibbs variational principle,
\begin{equation}\label{eq:Gibbs}
	\ln \tr e^{-\beta H} = -  \inf_{\varrho} \left[  \beta \, \tr \left(H \varrho \right) +  \tr\left( \varrho \ln \varrho \right) \right] , 
	\end{equation}
	in which the infimum is taken over all density matrices, i.e.\ positive-definite $ \varrho \geq 0 $, with unit trace $ \tr \varrho = 1 $, on $ \ell^2(\mathcal{Q}_N) $. 
Inserting the two canonical choices:  (i)~the classical Gibbs state of $ p $-spin interaction, $ \varrho \propto e^{-\beta U_p} $,  and  (ii)~the Gibbs state of the quantum paramagnet $ \varrho \propto e^{\beta \Gamma T}   $, and taking expectations yields
$$
	\mathbb{E}\left[ \Phi_{p,N}(\beta, \Gamma) \right] \geq  \max\left\{\mathbb{E}\left[ \Phi_{p,N}(\beta,0) \right] ,  \ln \cosh (\beta \Gamma ) \right\} .
$$
Taking the limit $ N \to \infty $ and subsequently $ p \to \infty $,  the classical quenched pressure converges by~\eqref{eq:convclassical} to the pressure of the REM. This completes the proof. 
\end{proof}

\subsection{Geometry of extreme negative deviations}
It thus remains to establish an upper bound. To do so, we decompose the Hamming cube $\mathcal{Q}_N = \{ -1,1 \}^N $ into sites of extreme negative deviation 
\begin{equation} 
\mathcal{L}_\varepsilon := \left\{ \pmb{\sigma} \in \mathcal{Q}_N  \ | \  U_p(\pmb{\sigma}) < - \varepsilon N \right\}  ,
\end{equation}
and its complement with  $\varepsilon > 0 $ as a variational parameter. Our basic strategy is to remove the restriction of $ T $ to the entire $1 $-step augmented region 
$$
  \mathcal{L}_\varepsilon^+ :=  \left\{ \pmb{\sigma} \in \mathcal{Q}_N \ | \  \dist( \pmb{\sigma} ,   \mathcal{L}_\varepsilon ) \leq 1 \right\} 
$$
from the Hamiltonian. Here and in the following, $\dist(\cdot, \cdot)$ refers to the canonical Hamming metric $\dist(\pmb{\sigma} , \pmb{\sigma}' ) := \sum_{i = 1}^{N} \frac12 |\sigma_i - \sigma'_i|$ for $\pmb{\sigma} , \pmb{\sigma}' \in \mathcal{Q}_N$. Let $  T_{ \mathcal{L}_\varepsilon^+ } $ stand for the self-adjoint operator on $ \ell^2(\mathcal{Q}_N) $ defined through its matrix elements 
\begin{align}
\langle \pmb{\sigma} |  T_{ \mathcal{L}_\varepsilon^+ } |  \pmb{\sigma}' \rangle :=  \begin{cases} 1 & \mbox{in case $  \pmb{\sigma} , \pmb{\sigma}'  \in  \mathcal{L}_\varepsilon^+  $ with $ \dist(\pmb{\sigma} , \pmb{\sigma}' ) = 1 $,} \\ 0 & \mbox{else.}
 \end{cases} 
\end{align} 
Up to this perturbation, the Hamiltonian is thus a direct sum
\begin{equation}\label{eq:directsumH}
H_{p,N} = U_p 1_{\mathcal{L}_\varepsilon} \oplus H_{\mathcal{L}_\varepsilon^c }    -  \Gamma T_{ \mathcal{L}_\varepsilon^+ } , 
\end{equation}
involving the multiplication operator by $ U_p $ on $ \ell^2(\mathcal{L}_\varepsilon ) $ and the restriction of $ H_{p,N}  $ to $ \ell^2(\mathcal{L}_\varepsilon^c ) $, which we abbreviate by 
$ H_{\mathcal{L}_\varepsilon^c } $. In order to estimate the operator norm of $ T_{ \mathcal{L}_\varepsilon^+ }  $, we use a similar strategy as in~\cite{MW20} and cover $ \mathcal{L}_\varepsilon^+ $ by a union of connected clusters. In contrast to the case $ p = \infty $, for which the correlation length is one, we need to consider the effect of an extensive correlation length.  
It is therefore reasonable to identify sites in  $  \mathcal{L}_\varepsilon^+ $, which are at a distance at most $Nr /2 $, with some $ r$ (to be chosen later as the correlation length). 
\begin{definition}
	Let $ r \in (0,1) $. 
    We call a set $C\subset \mathcal{Q}_N$ \emph{$r$-connected} if for any $ \pmb{\sigma}, \pmb{\sigma}^\prime\in C$ there exists a sequence $ \pmb{\sigma}= \pmb{\sigma}^0, \pmb{\sigma}^1,\dots, \pmb{\sigma}^m= \pmb{\sigma}^\prime$ all in $\mathcal{L}_\varepsilon^+ $ such that $\dist(\pmb{\sigma}^i,\pmb{\sigma}^{i+1}) < Nr/2 $ for all $0\leq i \leq m-1$. We call $C\subset \mathcal{L}_\varepsilon^+ $ a \emph{maximal $r$-connected component} if $C$ is $r$-connected and for any $r$-connected $C^\prime$ with $ C\subset C^\prime\subset \mathcal{L}_\varepsilon^+ $, it follows that $C=C^\prime$. We denote the family of maximal $r$-connected components of $\mathcal{L}_\varepsilon^+ $ by $\mathcal{C}_{\eps,r}$.
\end{definition}
By construction, the augmented extreme deviation set decomposes into disjoint maximal $r$-connected components
\begin{equation}\label{eq:decompcluster}
\mathcal{L}_\varepsilon^+= \biguplus_{C \in \mathcal{C}_{\varepsilon, r} } C . 
\end{equation}
Our construction relies on a cut-off in energies. 
It is the simple, natural generalization of the decomposition of the extremal sites for the REM in~\cite{MW20}. In contrast to the REM, however, one cannot expect the extremal deviations to be isolated; rather, one needs to group them in clusters of balls with the radius of the correlation length.  This is related to the overlap-gap property for large $ p $, which is used in the construction of 'lumps' and the investigation of the shattering transition for $ p $-spin models, see e.g.~\cite{Talagrand:2000aa,GJK23,Ala24}. This picture also suggests that the distribution of the (Q)REM's random overlap converges to that of the (Q)REM in the limit $ p \to \infty $. A proof of this would require an additional analysis.  

The main technical result in the proof is the control of the diameter of the maximal $r$-connected components. 
This is due to the bad localization property of $ T $, when restricted to any closed ball 
$$ \overline{B}_{Nr} \equiv  \overline{B}_{Nr}(\pmb{\sigma}^0)  := \{  \pmb{\sigma} \in \mathcal{Q}_N \ | \ \dist( \pmb{\sigma},  \pmb{\sigma}^0) \leq Nr \} $$ centered about some $ \pmb{\sigma}^0 $. Bounds on the operator norm of $ T_{\overline{B_{rN}}} $ were established in~\cite{FT05} with a minor refinement in~\cite{Manai:2023ys}. We reformulate them for our purpose:
\begin{lemma}\label{lem:opnormT}
 If $\displaystyle  \max_{C \in  \mathcal{C}_{\varepsilon, r} } \diam\ C \leq N r L $ with $ 0 < r L < \frac{1}{2} $, then for all $ N > (rL)^{-1} $:
 $$ \left\| T_{ \mathcal{L}_\varepsilon^+ }  \right\| \leq  \left\| T_{\overline{B}_{NrL}} \right\| \leq 2 N \sqrt{rL} .
 $$ 
\end{lemma}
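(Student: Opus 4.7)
The plan is to reduce the bound on $\|T_{\mathcal{L}_\varepsilon^+}\|$ to the ball estimate of \cite{FT05,Manai:2023ys} in three steps: an orthogonal direct-sum decomposition of $T_{\mathcal{L}_\varepsilon^+}$ along the maximal $r$-connected clusters, an embedding of each cluster into a Hamming ball of radius $NrL$, and finally a citation of the ball-adjacency norm bound.

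First I would exploit the cluster structure. Under the condition $N > (rL)^{-1}$ with $rL < \tfrac{1}{2}$, the correlation scale $Nr/2$ exceeds one, so any Hamming-1 pair $\pmb{\sigma},\pmb{\sigma}' \in \mathcal{L}_\varepsilon^+$ is $r$-connected by the direct one-step path, and both endpoints must belong to the same maximal cluster $C \in \mathcal{C}_{\varepsilon, r}$. In other words, no edge of the subgraph of the Hamming cube induced on $\mathcal{L}_\varepsilon^+$ straddles two distinct clusters, and with respect to the partition~\eqref{eq:decompcluster} one obtains the orthogonal decomposition
\begin{equation*}
T_{\mathcal{L}_\varepsilon^+} \;=\; \bigoplus_{C \in \mathcal{C}_{\varepsilon, r}} T_C,
\end{equation*}
where $T_C$ denotes the adjacency matrix of the subgraph induced on $C$. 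In particular $\|T_{\mathcal{L}_\varepsilon^+}\| = \max_{C} \|T_C\|$.

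Second, I would embed each cluster into a ball. Fix $C \in \mathcal{C}_{\varepsilon,r}$ and any $\pmb{\sigma}_0 \in C$. Since $\diam C \le NrL$, we have $C \subset \overline{B}_{NrL}(\pmb{\sigma}_0)$, so $T_C$ is a principal submatrix of $T_{\overline{B}_{NrL}(\pmb{\sigma}_0)}$. Both matrices are symmetric with nonnegative entries, so the Perron-Frobenius theorem (in the form that the spectral radius of a symmetric nonnegative matrix dominates that of any principal submatrix) yields $\|T_C\| \le \|T_{\overline{B}_{NrL}(\pmb{\sigma}_0)}\|$. By the vertex-transitivity of the Hamming cube, the right-hand side is independent of $\pmb{\sigma}_0$, which gives $\|T_{\mathcal{L}_\varepsilon^+}\| \le \|T_{\overline{B}_{NrL}}\|$, the first inequality of the lemma.

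The remaining inequality $\|T_{\overline{B}_{NrL}}\| \le 2N\sqrt{rL}$ is exactly the Friedland-Tzafriri type estimate from \cite{FT05}, in the sharpened form stated in \cite{Manai:2023ys}, and I would invoke it directly. Conceptually, the hardest step is the first one, where one has to verify that the definition of $r$-connectedness is calibrated so that $T_{\mathcal{L}_\varepsilon^+}$ decomposes cluster-by-cluster; the subsequent principal-submatrix comparison and the cited ball estimate are then essentially mechanical.
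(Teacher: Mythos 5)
Your approach matches the paper's exactly: decompose $T_{\mathcal{L}_\varepsilon^+}$ over maximal $r$-connected clusters via~\eqref{eq:decompcluster}, embed each cluster into a ball of radius $NrL$, invoke monotonicity of the nonnegative adjacency operator's norm under domain restriction (the paper phrases this as $T$ being positivity improving), and finish with the ball estimate \cite[Prop.\ 3.1]{Manai:2023ys}.

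The one detail to fix is your justification of the direct-sum step. From $N > (rL)^{-1}$ you get $NrL > 1$, hence $Nr > 1/L$, not $Nr > 2$; so the claim that the correlation scale $Nr/2$ exceeds one does not follow from the stated hypotheses once $L \geq 1$. To guarantee that no Hamming-1 edge of $\mathcal{L}_\varepsilon^+$ straddles two distinct maximal $r$-connected components (which is what makes $T_{\mathcal{L}_\varepsilon^+}$ genuinely block-diagonal), one needs $Nr/2 > 1$, i.e.\ $N > 2/r$, which is strictly stronger than $N > (rL)^{-1}$. The paper's own proof passes over this silently ("thanks to~\eqref{eq:decompcluster}"); it is harmless in context because the lemma is only invoked inside Lemma~\ref{lem:prob} for all sufficiently large $N$ with $r = r_p(\varepsilon)$ fixed, where $Nr/2 > 1$ certainly holds, but your explicit deduction of it from $N > (rL)^{-1}$ is not valid as written.
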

\begin{proof}
Thanks to~\eqref{eq:decompcluster}, we have
$
 \left\| T_{ \mathcal{L}_\varepsilon^+ }  \right\| \ =  \max_{C \in  \mathcal{C}_{\varepsilon, r} } \left\| T_{C}  \right\| $. 
By assumption, any $ C \in  \mathcal{C}_{\varepsilon, r} $ is contained in a ball $ B_{NrL} $ of radius $ N r L $. 
Since the restriction of $ T $ to any subset of $ \mathcal{Q}_N $ is a positivity improving operator, its operator norm is monotone increasing in the domain. 
Hence $  \left\| T_{C}  \right\| \leq  \left\| T_{ \overline{B}_{NrL}}  \right\| \leq 2 N \sqrt{ rL ( 1 - rL + N^{-1} )} $, where the last inequality is \cite[Prop. 3.1]{Manai:2023ys}. 
\end{proof}

The diameter of any maximal $r$-connected component of $\mathcal{L}_\varepsilon^+ $ is, in turn, controlled probabilistically.

\begin{lemma}\label{lem:prob}
For any $ \varepsilon > 0 $ and all $ p $ large enough, there is some $ c_p(\varepsilon) > 0  $ and $ r \equiv r_p(\varepsilon) \leq 1 $, $ L \equiv L_p(\varepsilon)\in \mathbbm{N} $ such that:
\begin{enumerate}
\item for all sufficiently large $ N $: \quad $ \displaystyle 
	 \mathbb{P}\left(   \max_{C \in  \mathcal{C}_{\varepsilon, r} } \diam\ C  >  N r L   \right) \leq e^{-N c_p(\varepsilon)}  $, 
\item 
 $\displaystyle  \lim_{p\to \infty} L_p(\varepsilon)  r_p(\varepsilon) = 0 $. 
 \end{enumerate}
\end{lemma}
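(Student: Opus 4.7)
My plan is to prove the lemma via a classical first-moment bound, combined with a chaining argument that exploits the rapid decay of the $p$-spin covariance $c_{p,N}(x)\approx x^p$ at overlaps bounded away from $\pm 1$. I will take $r=r_p(\varepsilon)$ tending to $0$ slowly as $p\to\infty$ (for instance $r_p=2(\log p)/p$), so that both $h(r_p)\to 0$ and $(1-r_p)^p\to 0$ hold, while $L=L_p(\varepsilon)\in\mathbb{N}$ is a moderately large integer, depending only on $\varepsilon$ for $p$ large.

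The first step is a geometric reduction from $r$-connected components to well-separated tuples in $\mathcal{L}_\varepsilon$. If some maximal $r$-connected component of $\mathcal{L}_\varepsilon^+$ has diameter exceeding $NrL$, then by $r$-connectedness there is a chain $\sigma^0,\dots,\sigma^m\in\mathcal{L}_\varepsilon^+$ with $\dist(\sigma^i,\sigma^{i+1})<Nr/2$ and $\dist(\sigma^0,\sigma^m)>NrL$. Each $\sigma^i$ sits at Hamming distance at most $1$ from some $\tau^i\in\mathcal{L}_\varepsilon$. To extract a truly well-separated sub-sequence, I track the ``sweep function'' $f(i)=\dist(\tau^0,\tau^i)$, whose increments are bounded by $Nr/2+O(1)$, and select $L+1$ indices $j_0<\dots<j_L$ so that $f(j_k)$ is close to $kNr$. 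By the reverse triangle inequality, the resulting sites $\tau^{j_0},\dots,\tau^{j_L}\in\mathcal{L}_\varepsilon$ have pairwise Hamming distances bounded below by $Nr/2$ and, since $r_pL\to 0$, above by $N/2$ for $p$ large. Consequently all pairwise overlaps are bounded in absolute value by $1-r$.

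The first-moment bound then proceeds as follows. For any such ordered tuple, $(U_p(\tau^{j_k}))_k$ is a centered Gaussian with diagonal covariance $N$ and off-diagonal entries bounded by $N((1-r)^p+o_N(1))$ thanks to~\eqref{eq:ass}. Applying Markov's inequality to $\sum_k U_p(\tau^{j_k})$---whose variance is at most $(L+1)N(1+L(1-r)^p+o(1))$---bounds the joint event $\{\forall k:U_p(\tau^{j_k})<-\varepsilon N\}$ by $\exp\bigl(-\tfrac12(L+1)\varepsilon^2N(1-o_p(1))\bigr)$. The number of admissible tuples is at most $2^N\cdot\exp\bigl(NLh(rL)+o(N)\bigr)$ (first site arbitrary; each subsequent site in a ball of radius $LNr$ around $\tau^0$), so a union bound yields
\begin{equation*}
\mathbb{P}\bigl(\max_{C\in\mathcal{C}_{\varepsilon,r}}\diam C>NrL\bigr)\leq\exp\!\Bigl(N\bigl[\ln 2+Lh(rL)-\tfrac12 L\varepsilon^2(1-o_p(1))\bigr]\Bigr).
\end{equation*}
Fixing $L_p(\varepsilon)=\lceil 4\ln 2/\varepsilon^2\rceil$ independent of $p$ makes the bracket strictly negative for $p$ large---so (i) holds with some $c_p(\varepsilon)>0$---while (ii), $r_pL_p\to 0$, is immediate.

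The main obstacle is the selection step: non-consecutive members of an $r$-connected chain in the Hamming cube can a priori be arbitrarily close, and a naive greedy extraction only controls consecutive pairs---insufficient for the Gaussian decoupling that follows. The sweep-function argument circumvents this by using geometric distance from a fiducial point rather than combinatorial chain order, and it is precisely assumption~\eqref{eq:ass}---making $(1-r)^p$ uniformly small as $p\to\infty$ for any fixed $r>0$---that provides the Gaussian decoupling needed to beat both the entropy $\ln 2$ of choosing the base point and the entropy $h(rL)$ of locating the chain in a ball of radius $NrL$.
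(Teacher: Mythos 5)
Your argument is correct, and it takes a genuinely different route from the paper in two respects, even though both arguments are first-moment bounds over tuples of well-separated sites in $\mathcal{L}_\varepsilon$. \textbf{Extraction.} The paper constructs a strongly self-avoiding sub-path via a last-exit algorithm inside the annuli $A_r(\pmb\sigma_j)$, whereas you track the sweep function $f(i)=\dist(\tau^0,\tau^i)$ and select first-passage indices through the levels $kNr$; the reverse triangle inequality $|f(j_k)-f(j_l)|\le\dist(\tau^{j_k},\tau^{j_l})$ then gives all pairwise lower bounds in one stroke, without the inductive exclusion of balls. Both yield what the Gaussian step needs, namely a tuple in $\mathcal{L}_\varepsilon$ with all pairwise overlaps bounded by $1-r$ in absolute value. (Tiny bookkeeping: the $\tau^i\in\mathcal{L}_\varepsilon$ satisfy $\dist(\tau^i,\tau^{i+1})<Nr/2+2$, so the first-passage indices are distinct once $N>4/r$, and the selected sites sit in $\overline{B}_{NrL}(\tau^0)$ up to an $O(Nr)$ slack that is harmless in $\gamma(\cdot)$.) \textbf{Parameter regime.} The paper takes $r_p=O(1/p)$, so $\delta_p:=(1-r_p)^p$ is a fixed small constant, and then must take $L_p\to\infty$ (of order $\delta_p^{-1}\gamma(r_p)^{-1/2}$) because the per-site exponential gain $\tfrac{\varepsilon^2}{4(1+L\delta_p)}$ saturates at $\varepsilon^2/(4\delta_p)$; the proof balances this cap against $\ln2$ and $L_p\gamma(r_p)$. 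You instead take $r_p=2(\log p)/p$, so $\delta_p\le p^{-2}\to0$; the variance inflation $1+L\delta_p\to1$, the per-site gain is plainly $\varepsilon^2/2$, and a constant $L_p=\lceil 4\ln2/\varepsilon^2\rceil$ already beats $\ln2$ once $\gamma(r_pL_p)\to0$. Your regime is cleaner, removes the need for a two-sided pinching of $L_p$, and in fact sidesteps a constant-tracking issue: with the paper's choice $\delta_p\le\varepsilon/(4\beta_c^2)$ the capped gain is of order $\varepsilon\beta_c^2$, which does not dominate $\ln2=\beta_c^2/2$ when $\varepsilon<1/2$; one appears to need $\delta_p$ of order $\varepsilon^2$, exactly as your choice effortlessly delivers.
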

\begin{proof}
The proof starts from the observation that in case there is a maximally $ r $-connected subset $C\subset \mathcal{L}_\varepsilon^+ $ of diameter exceeding $ N r L > 0 $,  then there is a path of at least $ L $ sites $ \pmb{\sigma}^0 , \pmb{\sigma}^1, \dots , \pmb{\sigma}^{L-1} \in  \mathcal{L}_\varepsilon  $ such that:
\begin{enumerate}
\item at step $ j = 1, \dots , L $, one connects $ \pmb{\sigma}^{j-1} $ to $ \pmb{\sigma}^j $ with $ \dist( \pmb{\sigma}^{j-1} ,  \pmb{\sigma}^j ) \in [N r/2 , N r]$. 
\item
the path is strongly self-avoiding in the sense that $   \pmb{\sigma}^j  \not\in \bigcup_{k=0}^{j-1} B_{Nr/2}(\pmb{\sigma}^k) $ with open balls $ B_{Nr/2}(\pmb{\sigma}^k)  := \{ \pmb{\sigma} \ | \  \dist(\pmb{\sigma}^k,  \pmb{\sigma}  ) < Nr /2 \} $.
\end{enumerate}
To construct such a path, we employ a last exit algorithm. By the definition of $ r $-connectedness and since $ \diam C \geq NLr +1 $, there are two sites $  \pmb{\tau}^0 ,  \pmb{\tau}^M \in \mathcal{L}_\varepsilon $ with  $ \dist(\pmb{\tau}^0 ,  \pmb{\tau}^M ) \geq N rL -1 $ . These points serve as the extremal points of a path  of extremal sites $ \pmb{\tau}^0,  \pmb{\tau}^1 , \dots \pmb{\tau}^{M-1}, \pmb{\tau}^{M} \in \mathcal{L}_\varepsilon  $ such that $ \dist(\pmb{\tau}^{j-1}, \pmb{\tau}^{j}) \in [1,Nr/2)$.  Without loss of generality, one may assume that this path of sites is already self-avoiding.  
From this self-avoiding path, we then construct $ \pmb{\sigma}^0 , \pmb{\sigma}^1, \dots , \pmb{\sigma}^{L-1}  $ by thinning and a last exit strategy: 
\begin{enumerate}
\item we set $  \pmb{\sigma}^0 :=  \pmb{\tau}^0 $ and select $ \pmb{\sigma}^1 $ from $ \pmb{\tau}^1 , \dots \pmb{\tau}^{M-1}, \pmb{\tau}^{M} $ as the last exit from the annulus
$$ A_r( \pmb{\sigma}^{0}) := \{ \pmb{\sigma}  \ | \ Nr/2 \leq \dist( \pmb{\sigma}^{0}, \pmb{\sigma} ) \leq Nr \} . $$ 
\item at the step from $  j = 1, \dots , L-2 $ to $ j+1 $, we iteratively select $ \pmb{\sigma}^{j+1} $ from the remaining path $  \pmb{\sigma}^{j} = \pmb{\tau}^{k(j)} , 
 \pmb{\tau}^{k(j)+1} , \dots , \pmb{\tau}^M $ as the last exit point from the union $   \bigcup_{k=0}^{j} B_{Nr}(\pmb{\sigma}^k)  $ in the annulus $ A_r( \pmb{\sigma}^{j}) $. Note that such a point exists, since the remaining path is by construction already avoiding $    \bigcup_{k=0}^{j-1} B_{Nr}(\pmb{\sigma}^k)  $, and, by $ r $-connectedness, there needs to be a vertex in the annulus $ A_r( \pmb{\sigma}^{j}) $. 
 \end{enumerate}
 This path has the desired properties listed above as 1.-2. Moreover, since by construction
 $$
  \dist(\pmb{\sigma}^0, \pmb{\sigma}^{L-1}) \leq \sum_{j=1}^{L-1}   \dist(\pmb{\sigma}^{j-1}, \pmb{\sigma}^{j})  \leq Nr (L-1) \leq  NLr - N < \dist(\pmb{\tau}^0 ,  \pmb{\tau}^M )  ,
  $$
   we indeed find at least $ L $ sites in this manner.

 Therefore, we may use a union bound to estimate
\begin{align}\label{eq:unionb}
 & \mathbb{P}\left(   \max_{C \in  \mathcal{C}_{\varepsilon, r} } \diam\ C  >  N r L  \right) 
   \leq 
 \mathbb{P}\left( \begin{array}{c} \mbox{There are $ \pmb{\sigma}^0 , \pmb{\sigma}^1, \dots , \pmb{\sigma}^{L-1} \in  \mathcal{L}_\varepsilon $ with} \\    \mbox{$ \dist( \pmb{\sigma}^{j-1},  \pmb{\sigma}^j) \leq N r $ and  $  \pmb{\sigma}^j \not\in \bigcup_{k=0}^{j-1} B_{\frac{rN}{2}}( \pmb{\sigma}^k ) $}\end{array} \right) \notag \\
 & \leq \sum_{\pmb{\sigma}^0 \in \mathcal{Q}_N} \sum_{\pmb{\sigma}^1 \in A_r( \pmb{\sigma}^{0}) \backslash B_{Nr/2}(\pmb{\sigma}^0) } \dots \mkern-10mu \sum_{ \pmb{\sigma}^{L-1} \in  A_r( \pmb{\sigma}^{L-2} )\backslash \bigcup_{k=0}^{L-2}  B_{Nr/2}(\pmb{\sigma}^k) }  \mkern-10mu \mathbb{P}\left(  \pmb{\sigma}^0 , \pmb{\sigma}^1, \dots , \pmb{\sigma}^{L-1} \in  \mathcal{L}_\varepsilon \right) . 
 \end{align}
 The last probability is upper bounded by 
 \begin{equation}
 \mathbb{P}\left( \sum_{k=0}^{L-1} U_p( \pmb{\sigma}^k) < - N L \varepsilon\right) = \int_{-\infty}^{-NL\varepsilon/  \mathbb{E}\left[ S_{L} ^2\right]}\mkern-10mu \exp\left(-\frac{x^2}{2}\right) \frac{dx}{\sqrt{2\pi} } \leq \exp\left( - \frac{ N^2 L^2 \varepsilon^2}{2 \  \mathbb{E}\left[ S_{L} ^2\right]}\right) , 
 \end{equation}
 where we used the fact that the sum $  S_{L} := \sum_{k=0}^{L-1} U( \pmb{\sigma}^k) $ is a Gaussian random variable with mean zero. Its covariance is bounded by
 \begin{align}
 \mathbb{E}\left[ S_{L} ^2\right]  \ & = L N c_{p,N}(1) + \sum_{k\neq k'}   N c_{p,N}(r_N(\pmb{\sigma}^k, \pmb{\sigma}^{k'}) ) \notag \\
 &  \leq N L \left[  c_{p,N}(1) + (L-1) c_{p,N}\big(1- \tfrac{2}{N} \dist(\pmb{\sigma}^k, \pmb{\sigma}^{k'})\big) \right]  \leq 2 LN \left[1 + L (1-r)^p \right]  .
 \end{align}
 The first inequality is based on the relation $ 1 - r_N(\pmb{\sigma}^k, \pmb{\sigma}^{k'}) = 2  \dist(\pmb{\sigma}^k, \pmb{\sigma}^{k'}) /N \geq r $ for any pair  of sites $ \pmb{\sigma}^k\neq \pmb{\sigma}^{k'} $. The last inequality is a consequence of the assumption~\eqref{eq:ass} and holds for all sufficiently large $ N $. 
 
 Since the volume of the closed ball $ \overline{B_{rN}}(\pmb{\sigma}^j)  $  is estimated in terms of the binary entropy $  \gamma(r ) = - r \ln r - (1-r) \ln (1-r) $ as
 \begin{equation}
  | \overline{B_{rN}} | \leq e^{N \gamma(r) } ,
 \end{equation}
 the right hand side of~\eqref{eq:unionb} is bounded from above by
\begin{align*}
  \exp\left( N \left[ \ln 2 + L \gamma(r) - \frac{L}{1+L\delta_p(r)} \frac{\varepsilon^2}{4}  \right] \right)  \qquad \mbox{with} \quad \delta_p(r) := (1-r)^p . 
 \end{align*}
We now pick
$$
r_p :=  \frac{\ln( 4\beta_c^2/ \varepsilon)}{p},   \quad \mbox{such that} \quad \delta_p(r_p) \leq \exp\left( - p  r_p \right) = \frac{\varepsilon}{4\beta_c^2} ,
$$
such that $ \lim_{p\to \infty } r_p = 0 $ and, hence, $ \liminf_{p\to \infty} \gamma(r_p) = 0 $. For $ p $ large enough, one may thus 
pick an integer $ L_p  \in \mathbb{N} $ such that 
$$
  \frac{1}{\delta_p(r_p)} \left( \frac{\varepsilon}{4 \sqrt{\gamma(r_p)} } - 1 \right) \leq L_p  \leq   \frac{1}{\delta_p(r_p)} \left( \frac{\varepsilon}{2 \sqrt{\gamma(r_p)} } - 1 \right). 
$$
 This ensures that 
 $$
 c_p(\varepsilon) := L_p \left( \frac{\varepsilon^2}{4 (1+L_p \delta_p(r_p))} - \gamma(r_p) \right) - \ln 2 \geq  L_p \sqrt{\gamma(r_p) } \ \frac{\varepsilon}{2} \left( 1 - \frac{2 \sqrt{\gamma(r_p) }}{\varepsilon} \right) - \frac{\beta_c^2}{2} > 0 ,
 $$
 for all $ p $ sufficiently large. 
 Moreover, $ \liminf_{p\to \infty} L_p = \infty $ as well as $ \lim_{p\to \infty } L_p r_p = 0 $ as claimed. 
 \end{proof}

\subsection{Upper bound}
We are now ready to finish the upper bound with an argument similar to the one employed in~\cite{MW20}. 
\begin{proof}[Proof of Theorem~\ref{thm:main}]  Thanks to  Proposition~\ref{prop:lower}, it remains to establish an upper bound. 
Starting from the representation~\eqref{eq:directsumH}, we pick the parameters $ r \equiv r_p(\varepsilon) $ and $ L = L_p(\varepsilon) $ as in Lemma~\ref{lem:prob} and assume the occurrence of the event $ \Omega_{p,N}(r,L,\varepsilon) $ from this Lemma. In this situation, Lemma~\ref{lem:opnormT} yields:
\begin{align}
\tr e^{-\beta H_{p,N }} & \leq e^{\beta \Gamma  \| T_{ \mathcal{L}_\varepsilon^+ }  \|   } \left(  \tr_{\ell^2( \mathcal{L}_\varepsilon)}  e^{-\beta U_p} + \tr_{\ell^2( \mathcal{L}_\varepsilon^c)}  e^{-\beta H_{\mathcal{L}_\varepsilon^c } }  \right)  \notag \\
& \leq e^{ 2 \beta \Gamma N \sqrt{rL}  } \left(  \tr e^{-\beta U_p} +  e^{\beta \varepsilon N } \tr  e^{\beta \Gamma T} \right) .
\end{align}
Here the last step used the fact that $ H_{\mathcal{L}_\varepsilon^c } \geq - N \varepsilon - \Gamma T_{\mathcal{L}_\varepsilon^c } $, and the monotonicity of the partition function in the domain, i.e.\ $  \tr_{\ell^2( \mathcal{L}_\varepsilon^c)} \exp\left( \beta \Gamma T_{\mathcal{L}_\varepsilon^c }  \right) \leq  \tr  e^{\beta \Gamma T}   $, which follows from the non-negativity of the matrix-elements of $ T $.  
 The above bound then implies that on  the event $ \Omega_{p,N}(r,L,\varepsilon) $:
\begin{equation}\label{eq:proof2}
\limsup_{N\to \infty} \Phi_{p,N}(\beta, \Gamma) \leq 2 \beta \Gamma  \sqrt{rL}  + \max\left\{ \limsup_{N\to \infty} \Phi_{p,N}(\beta, 0) , \beta \varepsilon + \ln \cosh(\beta \Gamma) \right\} . 
\end{equation}
By the  self-averaging property~\eqref{eq:selfaveraging} of the classical pressure ($ \Gamma = 0 $), one may further restrict to an event $\widehat \Omega_{p,N}(r,L,\varepsilon) \subset \Omega_{p,N}(r,L,\varepsilon)  $, which still has a probability exponentially close to one as $ N \to \infty $, to conclude that on $ \widehat \Omega_{p,N}(r,L,\varepsilon) $ for both $ \# \in \{0, \Gamma \} $: 
\begin{equation}
\limsup_{N\to \infty} \Phi_{p,N}(\beta, \#) =  \limsup_{N\to \infty} \mathbb{E}\left[ \Phi_{p,N}(\beta, \# )\right]  
\end{equation}
In turn, in the classical case $ \# = 0 $,  Proposition~\ref{prop:Clpinfty} guarantees that in the subsequent limit $ p \to \infty $ the right side converges to the REM's pressure $  \Phi_{\infty}(\beta, 0) $. 
In this limit, $ \lim_{p\to \infty} L_p r_p = 0 $ for any $ \varepsilon > 0 $ by Lemma~\ref{lem:prob}, so that the first term in the right side of~\eqref{eq:proof2} vanishes. 

The proof is completed by a  Borel-Cantelli argument, which relies on the fact  that the probabilities of the complementary event to $  \widehat \Omega_{p,N}(r,L,\varepsilon)  $ are summable in $ N $. 
\end{proof}

\minisec{Acknowledgments}
This work was supported by the DFG under EXC-2111--390814868 and DFG-TRR 352-Project-ID 470903074.

	\bigskip
	\bigskip
	\begin{minipage}{0.6\linewidth}
		\noindent Anouar Kouraich, Simone Warzel\\
		Departments of Mathematics \& Physics, and MCQST \\
		Technische Universit\"{a}t M\"{u}nchen
	\end{minipage}
	\bigskip
	\bigskip
	\begin{minipage}{0.5\linewidth}
		\noindent Chokri Manai \\
		Courant Institute of Mathematical Sciences \\
		New York University
	\end{minipage}

\end{document}